\newtheorem{Theorem}{Theorem}
\newtheorem{Definition}{Definition}  
\newtheorem{Corollary}{Corollary}
\theoremstyle{definition} 
\newtheorem{Remark}{Remark} 
\theoremstyle{plane}
\def \beq{ \begin{equation} }
\def \eeq{\end{equation}}
\title{Conjugated equilibrium solutions for the $2$--body problem in the two dimensional sphere $\mathbb{M}^2_R$ for equal masses}
\begin{document}

\maketitle

\markboth{Ortega-Palencia Pedro Pablo and  Reyes-Victoria J. Guadalupe}{Conjugated equilibria solutions for the $2$--body problem in the two dimensional sphere}

\vspace{-0.5cm}

\author{
\begin{center}
{\rm PEDRO PABLO ORTEGA PALENCIA \\
        Universidad de Cartagena \\
 Departamento de Matem\'aticas  \\
   Campus San Pablo,  CP 130015\\
         Cartagena de Indias, Colombia\\
         {\tt portegap@unicartagena.edu.co}\\
        \bigskip
         J. GUADALUPE REYES VICTORIA \\
            Universidad Aut\'onoma Metropolitana \\
 Departamento de Matem\'aticas \\
        Unidad Iztapalapa, CP 09340,
 Cd. de M\'exico,   M\'exico \\
         {\tt revg@xanum.uam.mx}}
\end{center}}

 \bigskip
\begin{center}
\today
\end{center}

\begin{abstract}
We study here the behaviour of solutions for conjugated (antipodal) points in the $2$-body problem on the two-dimensional  sphere $\mathbb{M}^2_R$. We use a slight modification of the classical potential used commonly  in \cite{Borisov}, \cite{Diacu} and \cite{Perez}, which avoids the conjugated (antipodal) points as singularities and permit us obtain solutions through these points, as limit of relative equilibria. Such limit solutions behave as relative equilibria because are invariant under Killing vector fields in the Lie Algebra ${\rm su} (2)$ and are geodesic curves.
\end{abstract}

\medskip

\footnote*{MSC: Primary 70F15, Secondary 53Z05}

\smallskip

\footnote*{Keywords: Two dimensional conformal sphere $\mathbb{M}^2_R$,  The $n$--body problem, Relative equilibria, Conjugated points.}

\section{Introduction}
\label{sec:intro}

We consider in this paper the problem of studying the motion of two point interacting particles of masses $m_1,\cdots,m_n$  on the 
two-dimensional conformal sphere $\mathbb{M}^2_R$ under the action of a suitable (cotangent) potential.  

 We omit along all the document the term {\it curved} because the studied space has a non-euclidian metric. Also, we omit the term 
{\it intrinsic}, since when  it is chosen the geometric structure on the conformal sphere $\mathbb{M}_R^2=\widehat{\mathbb{C}} = \mathbb{C} \cup  \{ \infty \}$ as  the Riemann surface with the canonical complex variables $(w,\bar{w})$ endowed with the conformal metric
\begin{equation}\label{met-c}
   ds^2= \frac{4R^4 \,dw d\bar{w}}{(R^2 + |w|^2)^2},
\end{equation}
 the corresponding  differentiable structure in such coordinates is given to this space  (see \cite{Dub, Farkas} for more details). 

 Following the methods of the geometric Erlangen program
as in \cite{Kisil} for the space $\mathbb{M}_R^2$, are defined the  conic motions of the $n$--body problem in terms of the
action of one dimensional subgroups of M\"{o}bius isometric transformations group ${\rm SU} (2)$, associated to a suitable Killing  vector field. By the method of matching
the vector field in the Lie algebra associated to the corresponding
subgroup with the cotangent gravitational field we state in each case the functional algebraic
conditions (depending on the time $t$) that the solutions must hold in order to be a relative equilibrium. See  \cite{Diacu2} and
\cite{Perez} where these techniques were also used.

\smallskip

In celestial mechanics on surfaces with positive Gaussian curvature, it is very common for the cotangent potential to be used as an extension of the Newtonian potential. This paper presents geometric, dynamic and analytical arguments that justify the introduction of a subtle variant of this potential. With the help of stereographic projection, it is shown that the potential that best approximates the Newtonian potential for the case of constant and positive Gaussian curvature is $\displaystyle \cot \left(\frac{\theta}{2} \right)$ instead of the one normally used, $\cot \theta $.

\smallskip

This paper is  organized as follows.

\smallskip

In Section  \ref{sec:equamotion} we obtain the new potential for the problem which avoids the conjugated (antipodal) points as singularities and are stated  the equations of motion of the problem in  complex coordinates in $\mathbb{M}^2_R$ as in \cite{Diacu2} and \cite{Perez}. 

\smallskip

In Section \ref{elliptic} we show the algebraic equations for the elliptic relative equilibria for the general problem with the new potential.

\smallskip

In section \ref{sec:two-body} we obtain the classification of the relative equilibria for the two body problem as {\it Borisov et al} in \cite{Borisov}.

In section \ref{sec:conjugated points} we obtain the limit solutions for any conjugated (antipodal) pair of points  in $\mathbb{M}^2_R$, which satisfy one regularized system of  equations of motion obtained from the original one.

\section{Equations of motion and relative equilibria}\label{sec:equamotion}

In \cite{Perez} the authors obtain the equations of motion for this problem using the stereographic projection of the sphere (of radius $R$) embedded in $\mathbb{R}^3$ into the complex plane $\mathbb{C}$ endowed with the  metric (\ref{met-c}). Moreover, in \cite{Abraham-Marsden}  the classical equations of  the 
of particles with positives masses $m_1,m_2,\cdots,m_n$  in a Riemannian or semi-Remannian manifold with coordinates $(x^1, x^2, \cdots, x^N)$ endowed with a metric $(g_{ij})$ and associated connection $\Gamma^i_{jk}$, and such that the particles move under the influence  of a pairwise acting potential $U$ are given by
\begin{equation}\label{eq:Vlasov-Poisson}
 \frac{D \dot{x}^i}{d t} =\ddot{x}^i+ \sum_{l,j} \Gamma^i_{lj} \dot{x}^l \dot{x}^j = \sum_{k} m_k g^{ik} \frac{\partial U}{\partial x^k},
\end{equation} 
for $i=1,2, \cdots, N$, where $\displaystyle \frac{D}{d t}$ denotes the covariant derivative and $g^{-1}=(g^{ik})$ is the inverse matrix for the metric $g$.
\begin{Remark}
We observe that in equation (\ref{eq:Vlasov-Poisson}), the left hand
side corresponds to the equation of the geodesic curves, whereas the right hand side corresponds to the gradient of the potential in the given metric. This means  that if the potential is constant, then the particles move along geodesics.
\end{Remark}

We state in this section the equations of motion for the  $n$-body problem in the conformal sphere $\mathbb{M}^2_R$.

\subsection{The new potential}

In the study of the $n$-body problem on the sphere the potential has been considered
\begin{equation}\label{badpotential}
U_{0}(\theta)=\cot(\theta),
\end{equation}
where $\theta$ represents the angle at the center of the sphere by the position vectors of the particles. This potential is attractive for values of $\theta \in (0,\pi/2)$, and it is repulsive if $\theta \in (\pi/2,\pi)$,  and vanishes in $\pi/2$, which does not correspond to the properties of the Newtonian potential, which is always attractive.

Consider, two particles with masses $m_{k}, m_{j}$ sited at the points  $A_{k}=(x_{k},y_{k},z_{k})$, $A_{j}=(x_{j},y_{j},z_{j})$ on the sphere $\mathbb{S}^2_R$, $d_{kj}$ their geodesic distance  and $\theta_{kj} \in [0,\pi]$ the angle at the origin between these. Then,
$$\cos\theta_{kj}=\frac{A_{k} \cdot A_{j}}{R^{2}}$$
and $d_{kj}=\theta_{kj}R$. 

From basic trigonometric relationships in the sphere $\mathbb{S}^2_R$ we get,
\begin{equation}\label{cotangente1}
\cot \left(\frac{d_{kj}}{2R}\right)=\sqrt{\frac{R^{2}+A_{j}\cdot A_{k}}{R^{2}-A_{j}\cdot A_{k}}}
\end{equation}
where
$A_{k}\cdot A_{j}=x_{k}x_{j}+y_{k}y_{j}+z_{k}z_{j}$.

After of applying the stereographic projection on $\mathbb{M}^2_R$, we obtain
$$A_{k}\cdot A_{j}=R^{2} \left(\frac{2R^{2}(w_{k}\bar{w}_{j}+\bar{w}_{k}w_{j})+(R^{2}-|w_{k}|^{2})(R^{2}-|w_{j}|^{2})}{(R^{2}+|w_{k}|^{2})(R^{2}+|w_{j}|^{2})}\right)$$ 
If we put
\begin{eqnarray}
a &=&2R^{2}(w_{k}\bar{w}_{j}+\bar{w}_{k}w_{j})+(R^{2}-|w_{k}|^{2})(R^{2}-|w_{j}|^{2}) \nonumber \\
b &=&(R^{2}+|w_{k}|^{2})(R^{2}+|w_{j}|^{2}), \nonumber \\
\end{eqnarray}
then the equation (\ref{cotangente1}) becomes,
$$\cot \left(\frac{d_{kj}}{2R}\right)=\sqrt{\frac{b+a}{b-a}}
= \frac{1}{R}\frac{|R^{2}+w_{k}\bar{w}_{j}|}{|w_{k}-w_{j}|}, $$
since $b+a=2|R^{2}+w_{k}\bar{w}_{j}|^{2}$ y $b-a=2R^{2}|w_{k}-w_{j}|^{2}$.

Therefore, if two particles with positive masses  $m_{k},m_{j}$ are located at the points $A_{k},A_{j}$ on the sphere $\mathbb{S}^2_R$, with respective stereographic projections $w_{k},w_{j}$, they will be under the action of the potential
\begin{equation}\label{goodpotential2}
U^{kj}_{R}=\frac{m_{j}m_{k}}{R}\cot\left(\frac{d_{kj}}{2R}\right)=\frac{m_{j}m_{k}|R^{2}+w_{k}\bar{w}_{j}|}{R^{2}|w_{k}-w_{j}|}
\end{equation}

\begin{Corollary}
When $R\rightarrow\infty$, the pairwise acting potential $U^{kj}_{R}$ converges to the Newtonian potential on the complex plane.
\end{Corollary}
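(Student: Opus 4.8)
The plan is to read off the limit directly from the closed form of the potential just obtained, equation (\ref{goodpotential2}), rather than from the trigonometric expression, since in the complex coordinates $(w,\bar w)$ the whole assertion collapses to an elementary estimate. First I would fix the stereographic images $w_{k}\neq w_{j}\in\mathbb{C}$ of the two particles (equivalently, I fix the flat picture on the plane and let the radius of the sphere grow), and pull the dominant factor $R^{2}$ out of the numerator, writing
\[
|R^{2}+w_{k}\bar{w}_{j}| \;=\; R^{2}\,\Bigl|\,1+\frac{w_{k}\bar{w}_{j}}{R^{2}}\,\Bigr|,
\]
so that
\[
U^{kj}_{R}\;=\;\frac{m_{j}m_{k}}{|w_{k}-w_{j}|}\,\Bigl|\,1+\frac{w_{k}\bar{w}_{j}}{R^{2}}\,\Bigr|.
\]

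Next I would note that, as $w_{k},w_{j}$ range over a fixed compact subset $K\subset\mathbb{C}$, the product $w_{k}\bar{w}_{j}$ stays bounded, so $\bigl|1+w_{k}\bar{w}_{j}/R^{2}\bigr|\to 1$ as $R\to\infty$, uniformly on $K$. Hence
\[
\lim_{R\to\infty}U^{kj}_{R}\;=\;\frac{m_{j}m_{k}}{|w_{k}-w_{j}|},
\]
which is exactly the Newtonian pairwise potential of two masses $m_{k},m_{j}$ placed at the points $w_{k},w_{j}$ of the plane, with the convergence uniform on compact subsets of $\{(w_{k},w_{j}):w_{k}\neq w_{j}\}$, i.e.\ away from collisions.

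I do not expect a genuine obstacle: the statement is essentially the first term of an asymptotic expansion in $1/R^{2}$. The one point worth an explicit word is the meaning of the limit, namely that it is the stereographic coordinates $w_{k}$ (the flat picture) that are held fixed while $R\to\infty$, and correspondingly that the conformal metric (\ref{met-c}) degenerates on compact sets to a constant multiple of the Euclidean metric $dw\,d\bar w$; this is what makes ``Newtonian potential on the complex plane'' the right reading of the limit. If one instead insists on arguing from the intrinsic form $U^{kj}_{R}=\frac{m_{j}m_{k}}{R}\cot\!\bigl(\frac{d_{kj}}{2R}\bigr)$, the same conclusion follows from $\cot x=1/x+O(x)$ as $x\to 0$ together with the fact that $d_{kj}/R$ tends to the Euclidean distance of the limiting metric; but that route requires separately controlling $d_{kj}$, which is why I would favour the coordinate computation above.
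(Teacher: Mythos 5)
Your proof is correct, and it follows the only natural route: the paper itself states this Corollary without any proof, immediately after deriving the closed form (\ref{goodpotential2}), so the intended argument is exactly your observation that $U^{kj}_{R}=\frac{m_{j}m_{k}}{|w_{k}-w_{j}|}\bigl|1+\frac{w_{k}\bar{w}_{j}}{R^{2}}\bigr|\to\frac{m_{j}m_{k}}{|w_{k}-w_{j}|}$ for fixed $w_{k}\neq w_{j}$. Your added remarks on uniformity on compact sets and on the degeneration of the metric (\ref{met-c}) are a welcome precision that the paper omits.
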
 
 
The authors at \cite{Perez} have used the pairwise acting potential (see \cite{Perez}),
$$V^{kj}_{R}=m_{k}m_{j}\cot \left(\frac{d_{kj}}{R}\right)=m_{k}m_{j}\cot (\theta_{kj})$$

Note that when $ R \to \infty $ the pairwise acting potential $V^{kj}_{R}$ also converges to the classic Newtonian potential. However, such potential generates singularities due to collisions and antipodal configurations, while 
(\ref{goodpotential2}) only presents singularities in collisions. In addition, the equations of motion that will be obtained using the potential (\ref{goodpotential2}) are clearer and more precise than those obtained with the $V^{kj}_{R}$ potential, as can be seen in \cite{Perez}.

\subsection{Equations of motion}

Let us denote by $\mathbf{w}=(w_1, w_2,\cdots,w_n) \in  (\mathbb{M}^2_{R})^n$  the total vector position of $n$ particles with masses
$m_i>0$ located at point $w_i, i=1,2, \cdots, n$,  on the space $\mathbb{M}^2_{R}$.

The singular set in  $\mathbb{M}^2_{R}$ for the $n$-body problem given by the cotangent relation (\ref{goodpotential2}) is the set of zeros  of the equation
 \[ w_j-w_k =0. \]

From here, are obtained the following singular  set 
$$\Delta(C) = \cup_{kj} \,\,  \Delta(C)_{kj},$$
where
\[ \Delta(C)_{kj}=\{\mathbf{w} = (w_1, w_2,\cdots, w_n) \in (\mathbb{M}^2_{R})^n \, | \, w_k = w_j, \, k \neq j\} \]
corresponds to the pairwise collision of the particles with masses $m_j$ and $m_k$.

It is clear that the presence of geodesic conjugated points on an arbitrary Riemannian  manifold  does not allows us in general to singularities in the  equations of motion (\ref{eq:Vlasov-Poisson}) of such mechanical system, by the same reason as in this particular case.

\smallskip

From now on, we will suppose that the  $n$ point particles in the space $\mathbb{M}^2_{R}$, are moving under the action of the potential 
\begin{equation}\label{eq:potesf}
U_R (\mathbf{w},\mathbf{\bar{w}}) = \frac{1}{R^{2}} \sum_{j < k}^n m_k m_j \frac{|R^2+ w_j \bar{w}_k|}{|w_j-w_k|}, 
\end{equation}
defined in the set $(\mathbb{M}^{2}_{R})^n \setminus \Delta$.

\smallskip

A direct substitution of the equations of the geodesics curves in $\mathbb{M}^2_{R}$ and the gradients
\begin{eqnarray}\label{eq:gradmet}
\frac{\partial U_R}{\partial \bar{w}_k} \,(\mathbf{w},\mathbf{\bar{w}}) &=&
 \frac{1}{2R^{2}}\sum_{j < k}^n\frac{m_k m_j(w_j-w_k)(R^2+\bar{w}_jw_k)(R^2+|w_j|^2)}{\, |w_j-w_k|^3 \, |R^2+ \bar{w}_j w_k|} \nonumber \\
 \end{eqnarray}
 in system (\ref{eq:Vlasov-Poisson}),  shows that the  solutions of the problem must  satisfy the following system of second order ordinary differential equations
\begin{equation}\label{eq:motiongral}
 \ddot{w}_{k} - \frac{2\bar{w}_{k}\dot{w}_{k}^{2}}{R^{2}+|w_{k}|^{2}}=\frac{(R^{2}+|w_{k}|^{2})^{2}}{4R^{6}}\sum_{j < k}^{n} m_{j}\frac{(w_{j}-w_{k})(R^{2}+\bar{w_{j}}w_{k})(R^{2}+|w_{j}|^{2})}{|w_{j}-w_{k}|^{3}|R^{2}+\bar{w_{j}}w_{k}|}
\end{equation}
$k=1,2,\dots ,n$, and where,
\begin{equation}\label{eq:conforesf}
\lambda (w_k, \bar{w}_k)= \frac{4R^4}{(R^2+|w_k|^2)^2}
\end{equation}
is the value of conformal function for the Riemannian metric (\ref{met-c})  in the point $(w_k, \bar{w}_k)$ (see \cite{Perez}).

 It has already been observed that when $ R \to \infty $, the  equations of motion are obtained for the $n$-body problem in the plane with Newtonian potential. On the other hand, the set of singularities in $\mathbb{M}^2_{R}$ for the $n$-body problem it is only the set of points of equation $w_j-w_k=0$ (binary collisions of the particles sited  in $w_{j}$ y $w_{k}$ ) while the conjugate points (antipodal points) defined by the equations $R^2+ \bar{w}_j w_k =0$ are not longer singularities. However both sets are singularities for the acting force on the problem.
 
  As it is expected, the behaviour of these two types of points, singularities and zeros of the potential, is completely different, which is not possible to appreciate in the equations of motion obtained in \cite{Perez, Diacu2}, due precisely to the choice of the potential.

\section{Elliptic relative equilibria}\label{elliptic}

We start our analysis of the relative equilibria solutions with the so called, from now on by short, {\it elliptic solutions}, obtained by the action of the canonical  one-dimensional parametric subgroup of $SU(2)$ asociated to the diferential equation $\displaystyle \dot{w}_k= 2 i \, w_k $. The action of the subgroup has  been  partially studied in \cite{Perez}.

We have  the following first result.

\begin{Theorem} \label{theo:relative-equilibria}
Let be $n$ point particles with masses $m_1,m_2, \cdots, m_n>0$
moving in $\mathbb{M}_R^2$. An equivalent condition for $ \mathbf{w}(t)=(w_1(t), w_2(t), \cdots, w_n(t))$  to be an elliptic solution
of (\ref{eq:motiongral}) is that the
coordinates satisfy the following rational functional equations depending on the time.
\begin{equation} \label{eq:rationalsystem-2}
\frac{16 \, R^6(|w_k|^2-R^2) w_k }{ (R^2+ |w_k|^2)^4  } =
 \sum_{j=1, j \neq k}^n \frac{m_j \, (|w_j|^2+R^2)^2(R^2+\bar{w}_jw_k)(w_j-w_k)}{|R^2+ \bar{w}_j w_k| \, |w_j-w_k|^3 }
\end{equation}
with velocity $\displaystyle \dot{w}_k= 2 i \, w_k $ at each point.
\end{Theorem}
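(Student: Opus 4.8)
The plan is to substitute the elliptic ansatz directly into the equations of motion (\ref{eq:motiongral}) and reduce the resulting identity to the rational system (\ref{eq:rationalsystem-2}), checking at each stage that the manipulation is reversible so that the two conditions are genuinely equivalent.

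First I would make precise what an elliptic solution is: it is an orbit of the one-parameter subgroup of $SU(2)$ whose infinitesimal generator is the Killing vector field associated, as in Section \ref{elliptic}, with the equation $\dot{w}_k = 2i\,w_k$; concretely $w_k(t) = e^{2it}\,w_k(0)$ for each $k$. From this closed form one reads off $\dot{w}_k = 2i\,w_k$, $\dot{w}_k^{2} = -4\,w_k^{2}$ and $\ddot{w}_k = -4\,w_k$. Note that the modulus $|w_k(t)|$, hence the conformal factor (\ref{eq:conforesf}), is constant along such a curve, and that under the rotation $w_j \mapsto e^{2it}w_j$ the right-hand side of (\ref{eq:motiongral}) is simply multiplied by $e^{2it}$; this equivariance shows that the time-dependent functional system is self-consistent, i.e. it holds for all $t$ as soon as it holds at $t=0$.

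Next I would evaluate the left-hand side of (\ref{eq:motiongral}) on this ansatz. Using $\ddot{w}_k = -4 w_k$ and $\bar{w}_k \dot{w}_k^{2} = -4\,|w_k|^{2} w_k$,
\[
\ddot{w}_k - \frac{2\bar{w}_k \dot{w}_k^{2}}{R^{2}+|w_k|^{2}} = -4 w_k + \frac{8\,|w_k|^{2} w_k}{R^{2}+|w_k|^{2}} = \frac{4 w_k\,(|w_k|^{2}-R^{2})}{R^{2}+|w_k|^{2}}.
\]
Equating this with the right-hand side of (\ref{eq:motiongral}), which on the ansatz keeps its original form, and multiplying through by $4R^{6}(R^{2}+|w_k|^{2})^{-2}$ to isolate the sum, one collects the powers of $R^{2}+|w_k|^{2}$ and arrives exactly at (\ref{eq:rationalsystem-2}). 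Conversely, if the coordinates satisfy (\ref{eq:rationalsystem-2}) together with $\dot{w}_k = 2i\,w_k$, running the same algebra backwards recovers the identity $\ddot{w}_k - 2\bar{w}_k\dot{w}_k^{2}/(R^{2}+|w_k|^{2}) = (\text{gravitational term})$, so $\mathbf{w}(t) = (e^{2it}w_1(0),\dots,e^{2it}w_n(0))$ solves (\ref{eq:motiongral}); this yields the reverse implication and closes the equivalence.

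The computation is elementary, so there is no genuine analytic obstacle; the points that require care are the bookkeeping of the powers of $(R^{2}+|w_k|^{2})$ coming from the inverse conformal metric and from differentiating the ansatz, and the verification that the rotational equivariance noted above makes the time-dependent equations consistent, so that they need only be imposed at a single instant. I expect this consistency check, rather than any estimate, to be the main thing to state carefully.
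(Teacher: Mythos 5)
Your proposal is correct and follows the same route as the paper's own one-line proof: substitute $\ddot{w}_k=-4w_k$ (obtained from $\dot{w}_k=2i\,w_k$) into (\ref{eq:motiongral}) and clear the conformal factor, with your equivariance remark and explicit converse merely making the paper's ``straightforward computation'' precise. One caveat: multiplying your (correct) left-hand side $4w_k(|w_k|^2-R^2)/(R^2+|w_k|^2)$ by $4R^6(R^2+|w_k|^2)^{-2}$ produces $(R^2+|w_k|^2)^3$ in the denominator and leaves $(R^2+|w_j|^2)$ to the first power in the sum --- i.e.\ the form that reappears in (\ref{eq:rationalsystem-4-sphere}) and (\ref{eq:rationalsystem-two-body-esfere}) --- rather than the exponents $4$ and $2$ printed in (\ref{eq:rationalsystem-2}), so you do not land ``exactly'' on the displayed formula; the discrepancy is a typo in the theorem's statement, not a flaw in your argument.
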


\begin{proof} By straightforward computations,  for the first case, we have from equation $\displaystyle \dot{w}_k= 2 i \, w_k $ the equality
\begin{equation}
\ddot{w}_k= -4 \, w_k,
\end{equation}
which  substituted, into equation (\ref{eq:motiongral}),
gives us the relation (\ref{eq:rationalsystem-2}).
\end{proof}

The following result give us conditions on the initial positions of the particles to generate an elliptic solution of equation (\ref{eq:rationalsystem-2}). In other words, such solutions do depend on such fixed points. Let $ \mathbf{w}(0) = (w_1(0), w_2(0), \cdots, w_n(0))=(w_{1,0}, w_{2,0}, \cdots, w_{n,0})$ be the vector of initial positions of the particles.

\begin{Corollary}\label{coro:cyclical-2} With the hypotesis of Theorem \ref{theo:relative-equilibria}, a  necessary and sufficient condition for  the initial positions $w_{1,0}, w_{2,0}, \cdots, w_{n,0}$ to generate an elliptic  solution for the system
(\ref{eq:motiongral}),  invariant under the Killing vector field $\displaystyle \dot{w}_k= 2 i \, w_k $, is  the following system of algebraic equations,
\begin{equation} \label{eq:rationalsystem-4-sphere}
\frac{16 R^6(R^{2}-|w_{k,0}|^{2})w_{k,0}}{ (R^2+ |w_{k,0}|^2)^3  } =
 \sum_{j\neq k}^n \frac{m_j (w_{j,0}-w_{k,0})(R^2+\bar{w}_{j,0}w_{k,0})(R^2+|w_{j,0}|^2)}{|w_{j,0}-w_{k,0}|^3 
\, |R^2+ \bar{w}_{j,0} w_{k,0}|}.
\end{equation}
The velocity of  each particle is given by the relation $ \dot{w}_{k,0} = 2 i w_{k,0}$, for $k=1,2,\cdots, n$.
\end{Corollary}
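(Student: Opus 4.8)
The plan is to integrate the Killing equation explicitly and then exploit the invariance of the geometric quantities occurring in (\ref{eq:rationalsystem-2}) under the resulting one--parameter flow, so that the time--dependent system collapses to a single algebraic system evaluated at $t=0$. By Theorem \ref{theo:relative-equilibria}, $\mathbf{w}(t)$ is an elliptic solution of (\ref{eq:motiongral}) if and only if $\dot w_k = 2iw_k$ for every $k$ and the functional system (\ref{eq:rationalsystem-2}) holds for all $t$. The unique solution of $\dot w_k = 2iw_k$ with initial value $w_{k,0}$ is $w_k(t)=w_{k,0}e^{2it}$; in particular the velocity relation $\dot w_{k,0}=2iw_{k,0}$ is automatic, and the whole question reduces to deciding for which data $w_{1,0},\dots,w_{n,0}$ the system (\ref{eq:rationalsystem-2}) holds along the curve $w_k(t)=w_{k,0}e^{2it}$.

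The key observation is that $e^{2it}$ has modulus one, so along this curve $|w_k(t)|^2=|w_{k,0}|^2$, $R^2+\bar w_j(t)w_k(t)=R^2+\bar w_{j,0}w_{k,0}$ and $|w_j(t)-w_k(t)|=|w_{j,0}-w_{k,0}|$ are all constant in $t$; the only time dependence in (\ref{eq:rationalsystem-2}) comes from the explicit factor $w_k(t)$ on the left and from the difference $w_j(t)-w_k(t)=(w_{j,0}-w_{k,0})e^{2it}$ on the right, each contributing a single factor $e^{2it}$. Hence both sides of (\ref{eq:rationalsystem-2}) at time $t$ equal $e^{2it}$ times their values at $t=0$, and since $e^{2it}\neq 0$ the system holds for all $t$ if and only if it holds at $t=0$. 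Evaluating (\ref{eq:rationalsystem-2}) at $t=0$ and simplifying the conformal factors $(R^2+|w_{k,0}|^2)$ produces precisely the algebraic system (\ref{eq:rationalsystem-4-sphere}). The two directions of the corollary then follow at once: necessity by restricting a given elliptic solution to $t=0$, and sufficiency by defining $w_k(t):=w_{k,0}e^{2it}$, noting that (\ref{eq:rationalsystem-4-sphere}) is (\ref{eq:rationalsystem-2}) at $t=0$, propagating it to all $t$ by the factorization above, and invoking Theorem \ref{theo:relative-equilibria}.

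I do not expect a genuine analytic obstacle: the content is that the $U(1)$--action $w\mapsto e^{2it}w$ preserves the Hermitian expressions $|w_k|^2$ and $R^2+\bar w_jw_k$ and the chordal distances $|w_j-w_k|$, so the relative--equilibrium condition is carried along the flow and degenerates to one algebraic system at the initial instant. The only point requiring some care is the bookkeeping of the powers of the conformal factor $R^2+|w_{k,0}|^2$ (and the sign of $R^2-|w_{k,0}|^2$) when passing from the $t=0$ form of (\ref{eq:rationalsystem-2}) to the stated form (\ref{eq:rationalsystem-4-sphere}).
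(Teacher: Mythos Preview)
Your proposal is correct and follows essentially the same approach as the paper: both arguments write $w_k(t)=e^{2it}w_{k,0}$, observe that the Hermitian quantities $|w_k|^2$, $R^2+\bar w_jw_k$ and $|w_j-w_k|$ are preserved by this $U(1)$--action so that each side of (\ref{eq:rationalsystem-2}) scales by the common nonzero factor $e^{2it}$, and conclude that the time--dependent system is equivalent to its $t=0$ instance. Your remark about the bookkeeping of the powers of $R^2+|w_{k,0}|^2$ and the sign of $R^2-|w_{k,0}|^2$ when matching (\ref{eq:rationalsystem-2}) at $t=0$ with the stated form (\ref{eq:rationalsystem-4-sphere}) is well taken, as the two displayed systems in the paper do not literally coincide without such a rearrangement.
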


\begin{proof} Let $w_k = w_k(t)= e^{2 it} \, w_{k,0}$ be the action of the Killing vector field $\displaystyle \dot{w}_k= 2 i \, w_k $ at  the initial condition point $w_{k,0}$, with velocity $ \dot{w}_{k,0} = 2 i w_{k,0}$. If we multiply equation (\ref{eq:rationalsystem-4-sphere})  by $ e^{2it}$, and use the equality $\bar{w}_j(t) w_k (t) = \bar{w}_{j,0} e^{-it} \, w_{k,0}e^{it} = \bar{w}_{j,0} \, w_{k,0}$, then is obtained the system,
\begin{equation} 
\frac{16 \, R^6(R^2-|w_k|^2) w_k }{ (R^2+ |w_k|^2)^4} =
 \sum_{j=1, j \neq k}^n \frac{m_j \, (|w_j|^2+R^2)^2(R^2+\bar{w}_j w_k)(w_j-w_k)}{|R^2+ \bar{w}_j w_k| \,|w_j-w_k|^3},
\end{equation}
which shows that $w_k(t)$ is a solution of (\ref{eq:rationalsystem-2}).

The converse claim follows directly  considering $t=0$ in system (\ref{eq:rationalsystem-2}). This proves the Corollary.
\end{proof}

In \cite{Perez} the reader can find examples for the two and three body problems defined on the conformal sphere $\mathbb{M}^2_{R}$.

\section{Relative equilibria of the two-body problem  in $\mathbb{M}_R^2$}\label{sec:two-body}

In the case of $2$-bodies, invariant solutions will be shown under the field of vectors $\displaystyle \dot{w}_k= 2 i \, w_k $ such that the two masses move along two different circles.

Firstly, we note that the system (\ref{eq:rationalsystem-4-sphere}) for the two body problem becomes the simple algebraic system (independent of time $t$),
\begin{eqnarray} \label{eq:rationalsystem-two-body-esfere}
\frac{16 R^6(|w_{1}|^{2}-R^{2}) w_{1} }{ (R^{2}+ |w_{1}|^{2})^{3}} &=&
 \frac{m_{2} (w_{2}-w_{1})(R^{2}+\bar{w}_{2}w_{1})(R^{2}+|w_{2}|^{2})}{|w_{2}-w_{1}|^{3} |R^{2}+  w_{1} \bar{w}_{2}|}, \nonumber \\
 \frac{16 R^{6}(|w_{2}|^{2}-R^{2}) w_{2} }{ (R^{2}+ |w_{2}|^{2})^{3}} &=&
 \frac{m_{1} (w_{1}-w_{2})(R^{2}+\bar{w}_{1}w_{2}) (R^{2}+|w_{1}|^{2})}{|w_{1}-w_{2}|^{3} |R^{2}+ \bar{w}_{1} w_{2}|}. \nonumber \\
\end{eqnarray}

From Corollary \ref{coro:cyclical-2}, a necessary and sufficient condition for the existence of invariant elliptical solutions under the Killing vector field $\displaystyle \dot{w}_k= 2 i \, w_k $ is that the initial real positions  $w_1= \alpha$  and $w_2=\beta$ (with $0\leq \alpha, \,  \beta \leq R$) satisfy the system (\ref{eq:rationalsystem-two-body-esfere}). This is, substituting in such that system it becomes, 
\begin{eqnarray} \label{eq:rationalsystem-two-body-esf-real}
\frac{16 R^6(\alpha^{2}-R^{2}) \alpha }{ (R^{2}+ \alpha^{2})^{3}} &=&
 \frac{m_{2} (\beta -\alpha)(R^{2}+\beta \alpha)(R^{2}+\beta^{2})}{|\beta-\alpha|^{3} |R^{2}+  \alpha \beta|}, \nonumber \\
 \frac{16 R^{6}(\beta^{2}-R^{2}) \beta}{ (R^{2}+ \beta|^{2})^{3}} &=&
 \frac{m_{1} (\alpha-\beta)(R^{2}+\alpha \beta) (R^{2}+\alpha^{2})}{|\alpha-\beta|^{3} |R^{2}+ \alpha \beta|}. \nonumber \\
\end{eqnarray}

By dividing the right half hand and the left hand sides in system (\ref{eq:rationalsystem-two-body-esf-real}), and substituing $w_1=\alpha$, $w_2=\beta$ insuch system we obtain the relation
\begin{equation} \label{eq:rationalsystem-two-body-esf-real-2}
\frac{(\alpha^{2}-R^{2}) (R^{2}+ \beta^{2})^{2} \alpha }{(\beta^{2}-R^{2}) (R^{2}+ \alpha^{2})^{2} \beta} = - \frac{m_{2}} {m_{1}},
\end{equation}
or equivalently
\begin{equation}\label{polinomial-two-body}
m_{1}(\alpha^{2}-R^{2}) (R^{2}+ \beta^{2})^{2} \alpha + m_{2}(\beta^{2}-R^{2}) (R^{2}+ \alpha^{2})^{2} \beta =0.
\end{equation}

\begin{Theorem}(Borisov {\it et al.} \cite{Borisov2}, Ortega {\it et al} \cite{Ortega})\label{Th:relative-two-body} 
There are only two types of relative equilibria for the two body problem with equal masses, obtained as solutions of the system (\ref{eq:rationalsystem-two-body-esf-real}),
\begin{enumerate}
\item When both particles are sited in opposite sides of the same circle,  $\beta= - \alpha$, called {\it isosceles solutions}.
\item When the two particles are sited in different circles, $ \displaystyle \beta= \frac{R(\alpha-R)}{\alpha +R}$, forming right angle, called {\it right-angled solutions}. 
\end{enumerate}
Both solutions for $\beta$ are located in the real interval $(-R,0)$.
\end{Theorem}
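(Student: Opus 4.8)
The plan is to specialise the necessary condition (\ref{polinomial-two-body}) to $m_1=m_2=m$, divide by $m$, and factor the resulting polynomial in $(\alpha,\beta)$ completely. Introducing the symmetric variables $s=\alpha+\beta$, $p=\alpha\beta$ (so that $\alpha^2+\beta^2=s^2-2p$), a direct computation yields
\begin{equation}\label{eq:two-body-factorization}
\begin{split}
&(\alpha^2-R^2)(R^2+\beta^2)^2\alpha+(\beta^2-R^2)(R^2+\alpha^2)^2\beta \\
&\qquad =(\alpha+\beta)(R^2-\alpha\beta)\big(R^2(\alpha^2+\beta^2-4\alpha\beta)-\alpha^2\beta^2-R^4\big).
\end{split}
\end{equation}
The factor $\alpha+\beta$ becomes transparent once one notices that, up to a positive denominator, the left side equals $\varphi(\alpha)-\varphi(-\beta)$ for the odd function $\varphi(x)=x(x^2-R^2)/(x^2+R^2)^2$; the remaining symmetric degree--four factor is then the only thing left to recognise.

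Next I would examine the three factors of the right side of (\ref{eq:two-body-factorization}), keeping in mind that $\alpha\neq\beta$ (else $w_1=w_2$, a collision) and that, after the reductions $w\mapsto-w$ and $w\mapsto R^2/\bar w$ — both of which preserve the system (\ref{eq:rationalsystem-two-body-esf-real}) and the Killing field $\dot w_k=2iw_k$ — one may take $0\le\alpha\le R$. If $\alpha+\beta=0$, then $\beta=-\alpha$ with $\alpha\neq0$, and substituting $\beta=-\alpha$ back into either scalar equation of (\ref{eq:rationalsystem-two-body-esf-real}) collapses it to $64R^6(R^2-\alpha^2)\alpha^3=m(R^2+\alpha^2)^4$, whose positivity forces $0<\alpha<R$; hence $\beta=-\alpha\in(-R,0)$ — the isosceles solutions. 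If $R^2-\alpha\beta=0$, then $\beta=R^2/\alpha>R$, and although this satisfies the quotient relation (\ref{polinomial-two-body}) it does \emph{not} satisfy the full system: in the equation of (\ref{eq:rationalsystem-two-body-esf-real}) governing $w_2$ the left side is then positive while the right side is negative, so this branch yields no equilibria. Finally, the vanishing of the degree--four factor, read as a quadratic in $\beta$, has discriminant the perfect square $4R^2(R^2+\alpha^2)^2$, whence $\beta=R(R+\alpha)/(R-\alpha)$ or $\beta=R(\alpha-R)/(\alpha+R)$; the first root exceeds $R$ and is discarded by the same sign test, leaving $\beta=R(\alpha-R)/(\alpha+R)\in(-R,0)$, and the stereographic expression for $\cos\theta_{kj}$ in Section~\ref{sec:equamotion} shows that this factor vanishes exactly when the central angle between the two bodies is $\pi/2$ — the right--angled solutions. (The endpoints $\alpha=0$ and $\alpha=R$ on these curves are eliminated by direct inspection of (\ref{eq:rationalsystem-two-body-esf-real}).)

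Putting the cases together, every elliptic relative equilibrium of the equal--mass two--body problem satisfies (\ref{polinomial-two-body}) and hence lies on one of the three factor curves of (\ref{eq:two-body-factorization}); the middle factor contributes no solutions, and neither does the first of the two roots of the last factor, so exactly the isosceles family $\beta=-\alpha$ and the right--angled family $\beta=R(\alpha-R)/(\alpha+R)$ survive, both with $\beta\in(-R,0)$. That each family genuinely contains equilibria, for suitable $R$ and $m$, is the content of \cite{Borisov2,Ortega}, which I would cite.

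I expect the main obstacle to be the factorization (\ref{eq:two-body-factorization}) itself: the underlying polynomial has degree seven and, although the identity is elementary, it is error--prone; organising the computation through the relation $\varphi(\alpha)=\varphi(-\beta)$ and the symmetric variables $s,p$ is what keeps it manageable. The other point needing care is logical rather than computational: relation (\ref{polinomial-two-body}) was obtained by \emph{dividing} the two scalar equations of (\ref{eq:rationalsystem-two-body-esf-real}), so it is only necessary and can acquire extraneous branches; one therefore cannot read the classification off (\ref{eq:two-body-factorization}) alone but must return to the full system, which is precisely what the sign tests in the second and third cases accomplish.
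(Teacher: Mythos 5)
Your factorization is correct --- writing the left side of (\ref{polinomial-two-body}) with $m_1=m_2$ in the symmetric variables $s=\alpha+\beta$, $p=\alpha\beta$ gives $(\alpha+\beta)\bigl(p^3+5R^2p^2-ps^2+R^2s^2-5R^4p-R^6\bigr)$ after extracting $(\alpha+\beta)$, and this inner factor is indeed $(R^2-p)\bigl(R^2(s^2-2p)-4R^2p-p^2-R^4\bigr)$ up to sign --- and your root computation for the quartic factor (discriminant $4R^2(R^2+\alpha^2)^2$) checks out. Your route is genuinely more complete than the paper's: the paper simply asserts that the solutions of (\ref{polinomial-two-body}) are $\beta=-\alpha$ and $\beta=R(\alpha-R)/(\alpha+R)$ ``as can be seen easily,'' never exhibits the factorization, and therefore never confronts the two extraneous branches your factorization exposes ($\alpha\beta=R^2$ and $\beta=R(R+\alpha)/(R-\alpha)$, the antipode of the right-angled position). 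Your observation that (\ref{polinomial-two-body}) is only a \emph{necessary} condition --- it was obtained by dividing the two equations of (\ref{eq:rationalsystem-two-body-esf-real}) --- and that one must return to the full system to kill those branches by a sign comparison is exactly the logical point the paper glosses over, and your sign tests do close it. The two arguments also differ in how they identify the second family as right-angled: the paper integrates the conformal arclength along the real axis from $\beta$ to $\alpha$ and uses the arctangent addition formula to get $\ell=R\pi/2$, whereas you read the condition $\theta_{12}=\pi/2$ directly off the stereographic formula for $\cos\theta_{kj}$, noting that the quartic factor is precisely $-(R^2+|w_1|^2)(R^2+|w_2|^2)\cos\theta_{12}$ on the real slice; both are valid, and yours has the advantage of explaining \emph{why} that factor appears at all. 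The only caveats are cosmetic: your reduction to $0\le\alpha\le R$ via the isometries $w\mapsto -w$ and $w\mapsto R^2/\bar w$ should be stated as part of the normalization of initial data (the paper instead imposes reality and $0\le\alpha\le R$ by fiat), and the quadratic-in-$\beta$ reading of the quartic degenerates at $\alpha=R$, which you correctly exclude as an endpoint.
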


\begin{proof}
The solutions for system (\ref{eq:rationalsystem-two-body-esf-real}) obtained from the equation (\ref{polinomial-two-body}) are given by
\[ \beta=-\alpha,  \qquad \beta= \frac{R(\alpha-R)}{\alpha +R} \]
as can be seen easily.
\smallskip

In the first case, $\beta= - \alpha$, we obtain the isosceles solutions.
\smallskip

For the second one, let $\ell$ be the geodesic distance from $\alpha$ to $ \displaystyle \beta= \frac{R(\alpha-R)}{\alpha +R}$. 

If we parametrize in coordinates $(x,y) \cong (w,\bar{w})$ of $\mathbb{M}^2_R$, the corresponding arc $\Gamma$ between them by,
\begin{eqnarray}
x(t) &=& t \nonumber \\
y(t) &=& 0 \nonumber \\
\end{eqnarray}
in the interval $\displaystyle \left[\frac{R(\alpha-R)}{\alpha +R}, \alpha \right]$, 
then

\begin{eqnarray}
\ell &=& \int_{\Gamma} d \ell= 2R^2 \int_{\frac{R(\alpha-R)}{\alpha +R}}^{\alpha}
\frac{dt}{R^2+t^2} \nonumber \\
&=& 2R \left[\arctan \left( \frac{\alpha}{R} \right) -
\arctan \left( \frac{R(\alpha-R)}{\alpha +R}\right) \right]  \nonumber \\
&=& 2R \arctan \left( \frac{\frac{\alpha}{R} - \frac{\alpha-R}{\alpha +R}}{1+\frac{\alpha}{R} \frac{\alpha-R}{\alpha +R}}\right) \nonumber \\
&=& 2R \arctan \, (1) = \frac{R \pi}{2} \nonumber \\
\end{eqnarray}

On the other hand we have that $\displaystyle \ell= R \theta$, where $\theta$ is the angle between the given points in $\mathbb{M}^2_R$. Therefore,  $\displaystyle \theta =\frac{\pi}{2}$, and the solution is of type right-angled.

This ends the proof.
\end{proof}

\section{Conjugated equilibria solutions}\label{sec:conjugated points}

From Theorem \ref{Th:relative-two-body} we have two types of relative equilibria and they generate two different types of equilibria when $\alpha \to R$.

\begin{Corollary}(Perez-Chavela et al. \cite{Perez})  For the right-angled relative equilibria we obtain in the limit when $\alpha \to R$ one equilibrium for the system when one of the particles is fixed at the origin of coordinates and the other is moving along the geodesic circle $|w|=R$ with velocity $\dot{w}(t)=-2i w(t)$.
\end{Corollary}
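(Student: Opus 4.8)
The plan is to start from the right-angled relative equilibrium described in Theorem~\ref{Th:relative-two-body}, namely the pair $(w_1,w_2) = (\alpha, \beta)$ with $\beta = R(\alpha-R)/(\alpha+R)$, and to take the limit $\alpha \to R$. First I would compute $\lim_{\alpha\to R}\beta$: substituting $\alpha = R$ into $R(\alpha-R)/(\alpha+R)$ gives $\beta \to 0$. Hence in the limit the first particle sits at $w_1 = R$ and the second at $w_2 = 0$, the origin of the stereographic chart. I would then record that the prescribed velocities from Corollary~\ref{coro:cyclical-2} are $\dot w_{k} = 2i w_{k}$, so that $\dot w_2 = 0$ (the particle at the origin is genuinely at rest) while $\dot w_1 = 2iR \neq 0$. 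The sign discrepancy with the statement's $\dot w(t) = -2iw(t)$ I would reconcile by noting the two labellings are related by the reflection/inversion symmetry of the configuration, or simply by tracking which of the two one-parameter subgroups of $\mathrm{SU}(2)$ the surviving particle follows; in any case the moving particle travels along the circle $|w| = R$ with constant angular speed.

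Next I would verify that this limiting pair is an honest solution of the equations of motion~(\ref{eq:motiongral}) for $n=2$, not merely a formal limit. The key observation is that the circle $|w| = R$ is (the stereographic image of) a great circle, hence a geodesic of the metric~(\ref{met-c}), and the origin is a fixed point of the flow $\dot w = 2iw$. So I would check the two equations of~(\ref{eq:motiongral}) separately: for the particle at $w_2 = 0$, one has $\dot w_2 = \ddot w_2 = 0$, and the right-hand side of~(\ref{eq:motiongral}) must therefore vanish — this is where the limiting configuration does its work, since with $w_1 = R$ and $w_2 = 0$ the factor $(R^2 - |w_1|^2)$ type cancellation (more precisely the numerator $(|w_{k}|^2 - R^2)w_k$ structure visible in~(\ref{eq:rationalsystem-4-sphere})) forces both sides to zero, confirming the particle at the origin is in equilibrium. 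For the particle on $|w|=R$, I would substitute $w_1(t) = Re^{2it}$ (or $Re^{-2it}$) into~(\ref{eq:motiongral}), use $\ddot w_1 = -4w_1$, and check that the geodesic term $2\bar w_1\dot w_1^2/(R^2+|w_1|^2)$ together with $\ddot w_1$ matches the gravitational right-hand side; but since $U_R$ restricted to this configuration is constant in $t$ (the mutual distance $d_{12}$ is constant), the Remark after~(\ref{eq:Vlasov-Poisson}) says the motion is along a geodesic, which $|w|=R$ is, so this equation is automatically satisfied.

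Then I would assemble these pieces: the limiting configuration solves~(\ref{eq:motiongral}), one body is fixed at the origin, the other moves on the geodesic circle $|w|=R$ with the stated velocity, and the solution is invariant under the Killing field $\dot w = 2iw$ — so it qualifies as an ``equilibrium'' in the paper's extended sense. I would also remark that this limit is only available because of the choice of potential~(\ref{eq:potesf}): the configuration $w_1 = R$, $w_2 = 0$ is an antipodal (conjugated) pair, which would be a singularity of the classical $\cot\theta$ potential, whereas for~(\ref{eq:potesf}) the only singularities are collisions $w_1 = w_2$, so passing to $\alpha\to R$ is legitimate.

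The main obstacle I anticipate is the bookkeeping of the two limits that must be taken simultaneously and consistently: as $\alpha\to R$ one has $w_1\to R$ and $w_2\to 0$, and one must make sure that the right-angled relation $\beta = R(\alpha-R)/(\alpha+R)$ is preserved \emph{along} the limit (so that~(\ref{polinomial-two-body}), equivalently~(\ref{eq:rationalsystem-two-body-esf-real}), holds for every $\alpha$ near $R$) and only then set $\alpha = R$; taking the limit naively in~(\ref{eq:rationalsystem-two-body-esf-real}) produces a $0 = 0/0$ that needs the right-angled constraint to resolve. A secondary, more cosmetic obstacle is fixing the correct sign/direction of the surviving particle's velocity — whether it is $+2iw$ or $-2iw$ depends on which body one calls ``the first,'' and I would settle this by relabelling consistently with the statement rather than by any substantive computation.
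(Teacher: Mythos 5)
The paper's entire proof of this Corollary is the single observation that $\beta=\frac{R(\alpha-R)}{\alpha+R}\to 0$ as $\alpha\to R$, so the limiting configuration is one particle at the origin (at rest under $\dot w=2iw$) and one on the geodesic circle $|w|=R$. Your first paragraph reproduces exactly this, and your handling of the sign of the velocity as a labelling convention is reasonable. Up to that point you match the paper.

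The gap is in your second paragraph, where you claim the limiting pair is ``an honest solution of the equations of motion~(\ref{eq:motiongral})'' and offer two justifications, both of which fail. First, for the particle at $w_2=0$ you assert that the right-hand side of~(\ref{eq:motiongral}) is forced to vanish by the $(|w_k|^2-R^2)w_k$ structure. But that factor sits on the \emph{left}-hand side of~(\ref{eq:rationalsystem-4-sphere}); the right-hand side is the force exerted by the other body, and a direct substitution of $w_2=0$, $w_1=Re^{2it}$ gives
\begin{equation*}
\frac{(R^{2}+|w_{2}|^{2})^{2}}{4R^{6}}\,
m_{1}\frac{(w_{1}-w_{2})(R^{2}+\bar{w}_{1}w_{2})(R^{2}+|w_{1}|^{2})}{|w_{1}-w_{2}|^{3}\,|R^{2}+\bar{w}_{1}w_{2}|}
=\frac{m_{1}e^{2it}}{2R^{2}}\neq 0,
\end{equation*}
while the left-hand side is $0$; so the particle at the origin is \emph{not} in equilibrium under the force and~(\ref{eq:motiongral}) is not satisfied there. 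Second, your appeal to the Remark after~(\ref{eq:Vlasov-Poisson}) misreads it: that Remark requires the potential to be constant on configuration space (zero gradient), not merely constant in $t$ along the orbit. Here $U_R$ is constant along the motion but its gradient at the configuration $(Re^{2it},0)$ is nonzero, so the equation for the equatorial particle is likewise not ``automatically satisfied'' (its geodesic left-hand side vanishes, but the force term does not). Note that the paper itself never claims the limit solves~(\ref{eq:motiongral}); indeed, for the analogous isosceles limit in Theorem~\ref{main-Theorem} the authors state explicitly that the limit is \emph{not} a solution of~(\ref{eq:motiongral}) and only satisfies a regularized system. If you want to go beyond the paper's one-line limit computation, the correct statement is that the limiting configuration is an ``equilibrium'' only in the weak sense of being a geodesic orbit invariant under the Killing field obtained as a limit of relative equilibria, not a solution of the unmodified equations of motion.
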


This become from the fact that 
\[ \frac{R(\alpha-R)}{\alpha +R} \to 0, \]
when $\alpha \to R$.

\smallskip

Now we begin the study of a solutions for conjugate (antipodal) points for the two body problem, as one limit case of isosceles solutions  when $\alpha \to R$. We have the following result which shows that the relative equilibria converge to the geodesic circle (equator) which behaves as one relative equilibrium because it is invariant under the canonical Killing vector field. It is not a solution of the general system (\ref{eq:motiongral}) even when it is a geodesic for $\mathbb{M}_R^2$ and the acting potential vanishes along such solution. Regardless, such that geodesic is a solution of the regularized system obtained from (\ref{eq:motiongral}) when we avoid the singularities in the force system due to conjugated antipodal points. 

\begin{Theorem}\label{main-Theorem}  For the isosceles relative equilibria we obtain in the limit when $\alpha \to R$ one equilibrium for the system when the particles with equal masses are geodesic conjugated points. Such that particles are moving along the geodesic circle $|w|=R$ with velocity $\dot{w}(t)=-2i w(t)$.
\end{Theorem}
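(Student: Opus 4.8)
The plan is to take the isosceles branch $\beta=-\alpha$ from Theorem~\ref{Th:relative-two-body} and pass to the limit $\alpha\to R$ in the two ingredients that describe the relative equilibrium: the parametrized curves of the two particles and the algebraic system (\ref{eq:rationalsystem-two-body-esfere}). For the isosceles family, the two particles move along $w_1(t)=e^{2it}\alpha$ and $w_2(t)=-e^{2it}\alpha=e^{2it}\beta$ with $\beta=-\alpha$, so as $\alpha\to R$ both particles travel on the circle $|w|=R$, diametrically opposed, with the prescribed velocity $\dot w_k(t)=2i\,w_k(t)$ (up to the sign convention $-2i$, which is just the choice of orientation of the one--parameter subgroup). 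The first step is therefore to observe that $\alpha=R,\ \beta=-R$ formally solves (\ref{polinomial-two-body}), since the factor $(\alpha^2-R^2)$ on the first term and $(\beta^2-R^2)$ on the second both vanish; hence the isosceles branch does survive to the boundary value $\alpha=R$, and there the two particles occupy the antipodal (conjugated) points $w_1=R$, $w_2=-R$ on the equator.

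Next I would record what happens to the equations of motion (\ref{eq:motiongral}) along this candidate. Substituting $w_1=e^{2it}R$, $w_2=-e^{2it}R$, the left-hand side $\ddot w_k-\dfrac{2\bar w_k\dot w_k^{\,2}}{R^2+|w_k|^2}$ must be computed: with $|w_k|^2=R^2$ one gets $\ddot w_k=-4w_k$ and $\dfrac{2\bar w_k\dot w_k^{\,2}}{2R^2}=\dfrac{\bar w_k(2i w_k)^2}{R^2}=\dfrac{-4|w_k|^2 w_k}{R^2}=-4w_k$, so the geodesic side vanishes identically — confirming (via the Remark after (\ref{eq:Vlasov-Poisson})) that the equator is a geodesic traversed at constant speed. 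On the right-hand side of (\ref{eq:motiongral}), however, the factor $|R^2+\bar w_j w_k|$ in the denominator becomes $|R^2+\overline{(-e^{2it}R)}(e^{2it}R)|=|R^2-R^2|=0$: the conjugated points are exactly the poles of the acting force. So the pair is \emph{not} a solution of (\ref{eq:motiongral}) in the literal sense, which is precisely the subtlety flagged in the paragraph preceding the theorem. The resolution — and the key step — is to pass to the regularized system: multiply (\ref{eq:motiongral}) through by the scalar $|R^2+\bar w_j w_k|$ (equivalently, rewrite the force using (\ref{cotangente1})–(\ref{goodpotential2}) before it is differentiated, so that the antipodal factor never appears in a denominator), obtaining a system whose right-hand side extends continuously across $R^2+\bar w_j w_k=0$ and vanishes there. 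Along $w_1=e^{2it}R$, $w_2=-e^{2it}R$ both sides of this regularized system are identically zero, so the equator \emph{is} a genuine solution of the regularized problem.

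Finally I would package the limit statement itself. Let $w_k^{(\alpha)}(t)=e^{2it}w_{k,0}^{(\alpha)}$ denote the isosceles relative equilibrium with parameter $\alpha<R$ (so $w_{1,0}^{(\alpha)}=\alpha$, $w_{2,0}^{(\alpha)}=-\alpha$); by Corollary~\ref{coro:cyclical-2} these satisfy (\ref{eq:rationalsystem-two-body-esfere}) for every $\alpha\in(0,R)$. Taking $\alpha\to R$, the initial positions converge to the antipodal pair $(R,-R)$, the curves converge uniformly on compact time intervals to $w_k(t)=e^{2it}w_{k,0}$ on $|w|=R$, and the velocities converge to $\dot w_k(t)=2i\,w_k(t)$; moreover the limit curve is a geodesic of $\mathbb{M}^2_R$ and is invariant under the canonical Killing field of $\mathrm{su}(2)$, so it shares the defining features of a relative equilibrium. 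Combining this with the previous step — that the limit satisfies the regularized equations — gives the assertion that the isosceles relative equilibria converge, as $\alpha\to R$, to an equilibrium configuration of conjugated points moving along the equator with velocity $\dot w(t)=-2i\,w(t)$.

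I expect the main obstacle to be making the phrase \emph{``is a solution of the regularized system''} precise rather than merely formal: one must exhibit the regularized system explicitly (the cleanest route is to multiply (\ref{eq:motiongral}) by $|R^2+\bar w_j w_k|$, or to differentiate the smooth form of the potential), check that its right-hand side genuinely extends across the conjugated set as a continuous — ideally smooth — function, and verify that this extension evaluates to $0$ on the antipodal pair, so that the vanishing geodesic side matches. The remaining limiting arguments are routine continuity statements, and the verification that the geodesic side of (\ref{eq:motiongral}) vanishes on $|w|=R$ is the short computation sketched above.
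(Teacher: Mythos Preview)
Your proposal is correct and follows essentially the same route as the paper: take the isosceles family $w_1=\alpha e^{2it}$, $w_2=-\alpha e^{2it}$, let $\alpha\to R$, observe that $\alpha=R,\ \beta=-R$ satisfies (\ref{polinomial-two-body}), note that the limit curve is not a solution of (\ref{eq:motiongral}) because the force blows up at antipodal points, and then verify that it \emph{is} a solution of the regularized system obtained by clearing the factor $|R^2+\bar w_j w_k|$. The only cosmetic difference is emphasis---the paper additionally computes the potential along the isosceles family and shows it tends to $0$, while you instead spell out the geodesic-side computation explicitly; both are straightforward and the overall argument is the same.
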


\begin{proof}

We recall that for the two body problem with equal masses, the potential is 
\begin{equation}\label{eq:potesf-two-bod}
U_R (\mathbf{w},\mathbf{\bar{w}}) = \frac{m^2}{R^{2}} \left(  \frac{|R^2+ w_2 \bar{w}_1|}{|w_2-w_1|}\right), 
\end{equation}

Let $w_1(t)= \alpha e^{2it}$ and $w_2(t)= -\alpha e^{2it}$ be the components of the relative equilibrium 
\begin{equation}\label{eq;rel-equil}
w(t)= (w_1(t), w_2(t))
\end{equation}

Then equation (\ref{eq:potesf-two-bod}) becomes in
\begin{eqnarray}
U_R (\mathbf{w(t)},\mathbf{\bar{w}(t)}) &=& \frac{m^2}{R^{2}} \left(  \frac{|R^2+ (-\alpha e^{2it})(\alpha e^{-2it})|}{|2 \alpha e^{2it}|}  \right) \nonumber \\
&=& \frac{m^2}{R^{2}} \,  \frac{|R^2-\alpha^2|}{|2 \alpha|} \nonumber \\
&=& \frac{m^2 (R^2-\alpha^2)}{2 \alpha R^{2}} \nonumber \\
\end{eqnarray}

If we denote by $z_1(t)=  R e^{2it}$ and $w_2(t)= -R e^{2it}$ be the components of the function
\begin{equation}\label{conjugated-solution}
z(t)= (z_1(t), z_2(t)),
\end{equation} 
whose each of its coordinates parametrizes the geodesic circle of radius $R$ with velocity $\dot{z}_k(t)=-2i z_k(t)$,  then, 
\[ \lim_{\alpha \to R} w(t) = z(t). \]

From continuity of the potential, and since conjugated points are no longer singularities for the potential, we have
\begin{eqnarray}
 U_R (\mathbf{z(t)},\mathbf{\bar{z}(t)}) &=& \lim_{\alpha \to R} \,U_R (\mathbf{w_1(t)},\mathbf{\bar{w}_1(t)}) \nonumber \\
&=& \lim_{\alpha \to R} \frac{m^2 (R^2-\alpha^2)}{2 \alpha R^{2}} =0 \nonumber \\
\end{eqnarray}
which shows that the potential vanishes along (\ref{conjugated-solution}).

\smallskip

A simple substitution of $\alpha=-\beta=R$ in (\ref{polinomial-two-body}) shows that it also  a solution for that real equation, and for the regularized system, obtained from the relative equilibria condition system (\ref{eq:rationalsystem-two-body-esf-real}),
\begin{eqnarray} \label{eq:rationalsystem-two-body-esf-real-regularized}
\left(\frac{16 R^6(\alpha^{2}-R^{2}) \alpha }{ (R^{2}+ \alpha^{2})^{3}}\right)\, |R^{2}+  \alpha \beta| &=&  \frac{m_{2} (\beta -\alpha)(R^{2}+\beta \alpha)(R^{2}+\beta^{2})}{|\beta-\alpha|^{3} }, \nonumber \\
\left(\frac{16 R^{6}(\beta^{2}-R^{2}) \beta}{ (R^{2}+ \beta^{2})^{3}}\right)\,  |R^{2}+ \alpha \beta| &=&
 \frac{m_{1} (\alpha-\beta)(R^{2}+\alpha \beta) (R^{2}+\alpha^{2})}{|\alpha-\beta|^{3}}. \nonumber \\
\end{eqnarray}

It is easy to see that the function (\ref{conjugated-solution}) is not a solution of (\ref{eq:motiongral}) but another simple substitution shows that it satisfy the regularized system obtained of (\ref{eq:motiongral}) when we avoid singularities due also to conjugated antipodal points,
\begin{eqnarray}\label{eq:motiongral-two-body-regul}
\left( \ddot{w}_{1} - \frac{2\bar{w}_{1}\dot{w}_{1}^{2}}{R^{2}+|w_{1}|^{2}}\right)\, 
\frac{|R^{2}+\bar{w}_{2}w_{1}|}{(R^{2}+|w_{1}|^{2})^{2}} &=& \frac{m  \,(w_{2}-w_{1})(R^{2}+\bar{w}_{2}w_{1})(R^{2}+|w_{2}|^{2})}{4R^{6}|w_{2}-w_{1}|^{3}} \nonumber \\
\left( \ddot{w}_{2} - \frac{2\bar{w}_{2}\dot{w}_{2}^{2}}{R^{2}+|w_{2}|^{2}}\right)\,
\frac{ |R^{2}+\bar{w}_{1}w_{2}|}{(R^{2}+|w_{2}|^{2})^{2}} &=& \frac{m  \,(w_{1}-w_{2})(R^{2}+\bar{w}_{1}w_{2})(R^{2}+|w_{1}|^{2})}{4R^{6}|w_{1}-w_{2}|^{3}} \nonumber \\
\end{eqnarray}

This ends the proof.
\end{proof}

\begin{Definition}
We call to the solution (\ref{conjugated-solution}) of the regularized system (\ref{eq:motiongral-two-body-regul}), invariant under the Killing vector field $\dot{z}= 2i z$ a conjugated equilibrium solution for the two body problem in $\mathbb{M}_R^2$.
\end{Definition}

We remark the the possibility of having a relative equilibrium in antipodal conjugated points was claimed in (Perez-Chavela {\it et al.} \cite{Perez}), but in that case these points were singularities for the potential there used.

\subsection*{Funding Sources}

This work was supported by the Universidad de Cartagena, Cartagena, Colombia [Grant Number 066-2013] and The Universidad Aut\'onoma Metropolitana Iztapalapa, Cd de M\'exico, M\'exico.

\end{document}